\newcounter{theorem}
\newtheorem{definition}{Definition}
\newtheorem{remark}{Remark}
\newtheorem{fact}{Fact}
\newtheorem{thm}[theorem]{Theorem}
\newtheorem*{conjecture*}{Conjecture}
\newtheoremstyle{nonindented}{1ex}{1ex}{}{}{\bfseries}{.}{.5em}{}
\newtheoremstyle{indented}{1ex}{1ex}{\itshape\addtolength{\leftskip}{0.6cm}\addtolength{\rightskip}{0.6cm}}{}{\bfseries}{.}{.5em}{}
\theoremstyle{nonindented}
\theoremstyle{indented}
\theoremstyle{plain}
\renewcommand{\bar}{\overline}
\def\min{\qopname\relax n{min}}
\def\max{\qopname\relax n{max}}
\def\Pr{\qopname\relax n{\mathbf{Pr}}}
\def\Ex{\qopname\relax n{\mathbf{E}}}
\newenvironment{lp*}{\begin{equation*}  \begin{array}{lll}}{\end{array}\end{equation*}}
\title{The Power of Signaling and its Intrinsic Connection to the \\ Price of Anarchy\thanks{This work is done while Nachbar is visiting the University of Virginia as a summer research intern.} 
}
\author[1]{James Nachbar}
\author[2]{Haifeng Xu}
\affil[1]{Yale University, {\tt jamie.nachbar@yale.edu}}
\affil[2]{University of Virginia, {\tt hx4ad@virginia.edu} }
\date{}
\begin{document}
\maketitle
\begin{abstract}
Strategic behaviors often render the equilibrium outcome inefficient. 
Recent  literature on information design, a.k.a. \emph{signaling}, looks to improve equilibria by selectively revealing  information to players in order to influence their actions. Most previous studies have focused on the \emph{prescriptive} question of designing optimal signaling schemes. This work departs from previous research by considering a \emph{descriptive} question, and looks to quantitatively characterize the \emph{power of signaling} (\texttt{PoS}), i.e., how much a signaling designer can improve her objective at the equilibrium outcome. 

We consider four  signaling schemes with increasing power: full information, optimal public signaling, optimal private signaling, and optimal \emph{ex-ante} private signaling. Our main result is a clean and tight characterization of the additional power each signaling scheme has over its  predecessors above in the general classes of cost-minimization and payoff-maximization games where: (1) all players minimize non-negative cost functions or maximize non-negative payoff functions; (2) the signaling designer (naturally) optimizes the sum of players' utilities. We prove that the additional power of signaling --- defined as the worst-case ratio between the equilibrium objectives of \emph{any} two signaling schemes in the above list --- is bounded precisely by the well-studied notion of the \emph{price of anarchy} (\texttt{PoA}) of the corresponding games. Moreover, we show that all these bounds are \emph{tight}.   

\end{abstract}

\section{Introduction}
A basic lesson from game theory is that strategic behaviors often render the equilibrium outcome inefficient. That is, the objective function value of an equilibrium outcome may be far from that of an optimal outcome in the absence of strategic behaviors. To reduce such inefficiency, one can ``tune'' the game equilibrium  towards a more desirable outcome, and there are two primary ways to achieve this goal: through providing \emph{incentives} or providing \emph{information}. The former approach has been widely studied in the celebrated field of mechanism design \cite{nisan2001algorithmic,procaccia2013approximate,chawla2014bayesian,conitzer2017fair}. This paper, however, focuses on the second approach, namely, improving  equilibrium by providing carefully designed information to influence players' decisions. This falls into the recent flourishing literature on  information design, a.k.a., \emph{signaling} or \emph{persuasion} \cite{dughmi2017survey,kamenica2019bayesian}. Researches in this literature so far have mainly focused on computing optimal signaling schemes for either fundamental setups \cite{Dughmi2016,Dughmi2017algorithmic,xu2020tractability,celli20} or models motivated by varied applications including auctions \cite{Emek12,Miltersen12,li2019signal}, public safety and security \cite{Xu15,Rabinovich15}, conservation \cite{Xu18},  privacy protection \cite{yan2020warn}, voting \cite{Castiglioni20}, congestion games \cite{bhaskar2016hardness,das2017reducing,Castiglioni2020SignalingIB}, recommender systems \cite{Mansour2016bayesian}, robot design \cite{keren2020information}, etc. 

Departing from the theme of all these previous works, this paper considers a different style of question. We look to characterize the \emph{power of signaling} ($\texttt{PoS}$) --- \emph{how much a signaling designer can improve her objective function of the equilibrium and can we quantitatively characterize this power?}  To our knowledge, this \emph{descriptive} question has not been formally examined before in the literature of signaling, except for a few studies which implicitly show certain  $\texttt{PoS}$-type results  in the special case of non-atomic routing games \cite{dughmi2017survey,das2017reducing,massicot2018comparative}. Nevertheless, the study of the \texttt{PoS} is extremely well-motivated. It not only deepens our understanding about signaling as an important ``knob'' to influence equilibrium, but also justifies the value of previous  prescriptive studies of optimal signaling design --- after all, the designed optimal signaling schemes are useful in practice only when the power of signaling is not negligible. Thus \texttt{PoS} is an important measure when determining the adoption of a signaling scheme in practice, especially when its tradeoff with other potential drawbacks such as communication costs \cite{gentzkow2014costly} and fairness concerns \cite{immorlica2019access} need to be balanced.

We focus on the general classes of cost-minimization games and payoff-maximization games, where each player minimizes a non-negative cost function or maximizes a non-negative payoff function. These classes of games are often studied in the  literature of the price of anarchy (\texttt{PoA}) \cite{roughgarden2015intrinsic}, and include many widely studied examples such as routing games, congestion games, most formats of auctions, valid utility games \cite{vetta2002nash}, etc. Like all standard models of signaling, players' utilities depend on a common  random state of nature $\theta$, which is drawn from a publicly known prior distribution. A signaling designer, referred to as the \emph{sender}, has an informational advantage and can access  the realized state $\theta$. The  sender is equipped with the natural objective of optimizing the total welfare, i.e., sum of players' utilities.

\vspace{3mm} 
\noindent{\bf Power of Signaling (\texttt{PoS}). } Our goal is to formally quantify the relative power of different types of signaling schemes as they become less constrained. In particular, we consider four types of signaling schemes with increasing power: full information (\texttt{FI}), optimal public signaling (\texttt{Pub}), optimal private signaling (\texttt{Pri}), and optimal \emph{ex-ante} private signaling (\texttt{exP}). \texttt{FI} is a natural \emph{benchmark} without any strategic use of information whereas \texttt{Pub}, \texttt{Pri}, \texttt{exP} are arguably the three most widely studied schemes in previous literature.\footnote{Public and private signaling has been extensively studied in previous works. Several recent works study ex-ante private signaling with motivations from recommender systems \cite{celli20,Castiglioni2020SignalingIB,xu2020tractability}. } The power of signaling of scheme \texttt{B} over \texttt{A} for \emph{any} \texttt{A} preceding \texttt{B} in  list \{\texttt{FI},  \texttt{Pub}, \texttt{Pri}, \texttt{exP}\} --- termed \texttt{PoS(B:A)} --- is defined as the ratio of the sender's utilities from scheme \texttt{A} and \texttt{B}. This ratio  is \emph{at least} $1$ for cost-minimization games and \emph{at most} $1$ for payoff-maximization games (the same as the range of the price of anarchy). Moreover, the further it is from $1$, the more powerful scheme \texttt{B} is than scheme \texttt{A}.  

\vspace{3mm}
\noindent {\bf Characterizations of \texttt{PoS}. } Our main result is a clean and tight characterization about the power of signaling.  Concretely,  for any cost-minimization game with a random state,  we prove that all the aforementioned \texttt{PoS} ratios are upper bounded by the maximum \texttt{PoA} of its corresponding realized games. Moreover,  all these upper bounds are tight in the following sense: for any ratio $r \geq 1$ and any scheme \texttt{A} preceding \texttt{B} in the list \{\texttt{FI},  \texttt{Pub}, \texttt{Pri}, \texttt{exP}\},  there exists a Bayesian cost-minimization game where all of its realized games have $\texttt{PoA} = r$ and moreover  $\texttt{PoS(B:A)} = r$ as well. We show that exactly the same results hold for payoff-maximization games --- the \texttt{PoS}s are similarly bounded by  \texttt{PoA} and all the bounds are tight.\footnote{Instead of \texttt{FI}, another natural benchmark scheme is to reveal \emph{no} information. The $\texttt{PoS}$s  compared to this benchmark turn out to be  unbounded, which we show in  Appendix \ref{append:no-info}. } 

Our results reveal the intrinsic connections between the power of signaling and price of anarchy. Prior to this work, it was not clear that these two concepts are inherently related ---  \texttt{PoA} characterizes the worst-case equilibrium welfare whereas \texttt{PoS} characterizes how much information can be strategically used to improve welfare. To our knowledge, recent work \cite{massicot2018comparative} is the only one to observe this connection but only for the power of public signaling over full information in non-atomic routing with afffine latency functions. Our results are   more systematic and general.  An interesting computational implication of our characterization is that the full information scheme always  serves as an $r$ approximation simultaneously for optimal public signaling, optimal private signaling and optimal ex-ante private signaling for cost-minimization or reward-maximization games, where $r$ is the worst \texttt{PoA} among realized games.   

\section{Preliminaries}
\subsection{Cost-Minimization/Payoff-Maximization Games}
A cost-minimization game $G$ is a standard strategic game where each player $i$ minimizes a non-negative cost function $c_i \geq 0$. Let $n$ denotes the number of players in the game. Each player $i \in [n] = \{ 1, \cdots, n \}$  has action space $S_i$. Let $S = S_1 \times S_2 \cdots S_n$ denote the space of action profiles and $s \in S$ is a generic action profile.  
A (randomized) mixed strategy for player $i$ is a distribution $\mathbf{x}_i$ over $S_i$ where $x_i(s_i)$ is the probability of taking action $s_i$. By convention,  $\mathbf{x}$ denotes the profile of mixed strategies for all players, and $\mathbf{x}_{-i}$ denotes all the mixed strategies excluding $i$'s. With slight abuse of notation, let $c_i(\mathbf{x}) = \Ex_{s_i \sim \mathbf{x}_i, \forall i} c_i(s)$ denote the expected utility of player $i$ under mixed strategy $\mathbf{x}$.  There is also a \emph{global objective} which is simply to minimize the sum of the total costs $C(\mathbf{x}) = \sum_i c_i(\mathbf{x})$. 

We adopt the standard mixed-strategy \emph{Nash equilibrium} (NE) as the solution concept.  A strategy profile $\mathbf{x}^*$ is a NE if for each player $i$, $c_i(\mathbf{x}^*) \leq c_i(\mathbf{x}_i, \mathbf{x}^*_{-i})$ for any $\mathbf{x}_i \in \Delta(S_i)$. Let $X^*$ denote the set of all NEs. The well-studied concept of the price of anarchy (for mixed equilibria) for a cost-minimization game  is defined   as  follows \cite{koutsoupias1999worst,kulkarni2014robust,roughgarden2015intrinsic,feldman2016price} 
\begin{equation}\label{eq:poa}
	\texttt{POA} =  \frac{\max_{\mathbf{x}^* \in X^*}C(\mathbf{x}^*)}{\min_{s \in S}C(s)} \in [1, \infty)
\end{equation}
In other words, the $\texttt{POA}$  is the ratio between the worst Nash equilibrium and the optimal social outcome. 
\begin{remark}
	The \texttt{PoA} can also be defined with respect to  pure Nash equilibrium in which case $X^*$ consists of all pure equilibria. Since not every game admits a pure Nash equilibrium, in striving for generality, we choose to analyze the version w.r.t. to mixed equilibria  since they always exist in finite games as well as in many infinite games. However, all our results --- both upper and lower bound proofs --- hold for pure equilibria as well, so long as they exist.  
\end{remark}

\emph{Payoff-maximization games} are defined similarly; here each player $i$ maximizes expected payoff $u_i(\mathbf{x}) \geq 0$. The global objective is to maximize $U(\mathbf{x}) = \sum_i u_i(\mathbf{x})$. 
The price of anarchy here is defined similarly as $\texttt{POA} = \frac{\min_{ \mathbf{x}^* \in X^*}U(\mathbf{x}^*)}{\max_{s \in S}U(s)}$, which now lies in $[0,1]$.

This paper concerns games with uncertainty. Specifically,  players' payoffs   depend also on a \emph{random} state of nature $\theta$ drawn from support $\Theta$ with distribution $\lambda$. We use $c^{\theta}_i(s)$/$u^{\theta}_i(s)$ to denote the cost/payoff function at state $\theta$. Such a \emph{Bayesian game} is denoted by $\{ G^{\theta}\}_{ \theta \sim \lambda }$. As is standard in information design, the prior distribution $\lambda$ is publicly known to every player. We assume $\Theta$ to be finite for ease of notation, and use $\lambda(\theta)$ to denote the probability of  state $\theta$. However, all our results hold for infinite state space.

\subsection{ Signaling Schemes and Equilibrium Concepts}\label{sec:prelim:scheme}
This paper adopts the perspective of an informationally advantaged \emph{sender} who has privileged access to the \emph{realized} state $\theta$ and would like to strategically signal this information to   players in order to influence their actions. The sender is equipped with the natural objective of optimizing the sum of the players' utilities, i.e., the global objective $C(\mathbf{x})$ or $U(\mathbf{x})$, at equilibrium. We   consider  three natural types of signaling schemes with increasing generality.  


\vspace{3mm}
\noindent {\bf Public Signaling.} At a high level, a \textit{public signalling scheme} constructs a random variable $\sigma$  from support $\Sigma$ --- called the \emph{signal} --- that is correlated with the state of nature $\theta$. The scheme then sends the sampled signal $\sigma$ publicly to all players, which carries information about the state $\theta$ due to their correlation. 
Such a public scheme $\varphi$ can be fully described by variables $\{ \varphi(\sigma; \theta) \}_{\sigma \in \Sigma, \theta \in \Theta}$ where $\varphi(\sigma; \theta)$  is the probability of sending signal $\sigma$ \emph{conditioned} on state of nature $\theta$. Adopting the standard information design assumption \cite{kamenica2011bayesian,kamenica2019bayesian}, the sender  commits to the signaling scheme before state $\theta$ is realized. Therefore, $\varphi$ is publicly known to all players. The probability of sending   signal $\sigma$ equals  $\Pr(\sigma) = \sum_\theta \lambda(\theta) \varphi(\sigma; \theta)$.  Upon receiving  signal $\sigma$, all   players perform a standard Bayesian update and infer the following posterior probability about the state $\theta$: $\Pr(\theta | \sigma) =  \lambda(\theta) \varphi( \sigma; \theta )/P(\sigma)$. 

Since all players receive the same information, the game will be played according to the expected cost $\bar{c}_i (s;\sigma) =  \sum_{\theta} \Pr(\theta|\sigma) c^{\theta}_i(s)$ or $\bar{u}_i (s;\sigma) =  \sum_{\theta} \Pr(\theta|\sigma) u^{\theta}_i(s)$ for all  $i$ and signal $\sigma$. We assume that players will reach a  NE of this average game. Let $C(\sigma)$   denote the sender's expected cost at equilibrium under signal $\sigma$ and $C(\varphi) = \sum_{\sigma} \Pr(\sigma) C(\sigma) $ denote the expected sender cost under signaling scheme $\varphi$. Like the \texttt{PoA} literature, when there are multiple Nash equilibria, we always adopt the \emph{worst}  one in  our analysis. 
Notations for payoff maximization are defined similarly. 



\vspace{3mm}
\noindent {\bf Private Signaling.} Private signaling relaxes public signaling by allowing the sender to send different, and possibly correlated, signals to different players. Specifically, let $\Sigma_i$ denote  the set of possible signals to player $i$ and $\Sigma = \Sigma_1  \times ... \times \Sigma_n$ denote the set of all possible signal profiles. With slight abuse of notation, a private signaling scheme can   be similarly captured by variables $\{ \varphi(\sigma;\theta) \}_{\theta \in \Theta, \sigma \in \Sigma}$. When signal profile $\sigma$ is restricted to have the same signal to all players, this degenerates to public signaling.  Private signaling leads to a truly Bayesian game where each player holds different information about the state of nature. The standard solution concept in this case is the \emph{Bayes correlated equilibrium} (BCE) introduced by Bergemann and Morris \cite{bergemann2016bayes}, which consists of all outcomes that can possibly arise at Bayes Nash equilibrium under all possible signaling schemes. Standard revelation-principle type argument shows that   signals of private signaling schemes in a BCE can be interpreted as \emph{obedient action recommendations} \cite{kamenica2011bayesian,bergemann2016bayes,Dughmi2016}. That is, $\Sigma_i$ can W.L.O.G. be $S_i$ and $\Sigma = S$. An action recommendation $s_i$ to player $i$ is \emph{obedient} if following this recommended action is indeed a best response for $i$, or formally, for any $ s_i, s'_i \in S_i$ we have 
\begin{equation}\label{eq:obedience}
	\sum_{\theta \in \Theta, s_{-i} \in S_{-i}} \varphi(s_i, s_{-i};\theta) \lambda(\theta) c_i^{\theta}(s_i, s_{-i}) \geq \sum_{\theta \in \Theta, s_{-i} \in S_{-i}} \varphi(s_i, s_{-i};\theta) \lambda(\theta) c_i^{\theta}(s'_i, s_{-i})    
\end{equation}

\vspace{2mm}
\noindent  \textbf{Ex-Ante Private Signaling.}  Motivated by recommender system applications, recent works   \cite{xu2020tractability,celli20,Castiglioni2020SignalingIB} relax the obedience constraints \eqref{eq:obedience} of BCE to a \emph{coarse correlated equilibrium} type of obedience constraints, described as follows:
\begin{equation}\label{eq:cce-obedience}
	\sum_{\theta \in \Theta, s  \in S} \varphi(s_i, s_{-i};\theta) \lambda(\theta) c_i^{\theta}(s_i, s_{-i}) \geq \sum_{\theta \in \Theta, s \in S } \varphi(s_i, s_{-i};\theta) \lambda(\theta) c_i^{\theta}(s'_i, s_{-i}), \forall  s'_i \in S_i.  
\end{equation}
That is, for any player $i$, following the recommendation is better than opting out of the signaling scheme  and acting just according to his prior belief.  A signaling scheme satisfying Constraint \eqref{eq:cce-obedience} is dubbed an \emph{ex-ante} private  scheme \cite{celli20,Castiglioni2020SignalingIB}. 

\section{The Power of Signaling (\texttt{PoS})}\label{sec:PoS}
We now formalize the \emph{Power of Signaling} (\texttt{PoS}) in cost-minimization and payoff-maximization games.  
Intuitively, the  \texttt{PoS} characterizes how much additional power  a  class of signaling schemes has over another. Formally, let $\Phi^a$ and $\Phi^b$ be two classes of signaling schemes (e.g.,  public and private  schemes).  We say $\Phi^b$ is \emph{less restricted} than $\Phi^a$,  conveniently denoted as $\Phi^a \subseteq \Phi^b$,  if $\varphi \in \Phi^b$ whenever  $\varphi \in \Phi^a$. 

\begin{definition}[\texttt{PoS} of $\Phi^b$ over $\Phi^a$]\label{def:pos}
	For any two classes of signaling schemes  $\Phi^a, \Phi^b$ where $\Phi^b$ is less restricted than $\Phi^a$ (i.e., $ \Phi^a \subseteq \Phi^b$), the power of signaling of $\Phi^b$ over $\Phi^a$, or $ \texttt{PoS}(\Phi^b : \Phi^a)$ for short, is defined as  
	\begin{equation*}
		\texttt{PoS}(\Phi^b: \Phi^a) =  \frac{\min_{\varphi \in \Phi^a} C(\varphi)}{\min_{\varphi \in \Phi^b} C(\varphi)}    \bigg( or \, \frac{\max_{\varphi \in \Phi^a} U(\varphi)}{\max_{\varphi \in \Phi^b} U(\varphi)}  \bigg), 
	\end{equation*}
	for cost-minimization (or payoff-maximization) games. 
\end{definition}    
In other words, \texttt{PoS}  is the ratio between the objectives of the optimal   scheme from signaling class $\Phi^a$ and that from a less restricted class $\Phi^b$. 
Similar to  the \texttt{PoA} ratio,   $\texttt{PoS}(\Phi^b: \Phi^a) \geq 1$ for cost-minimization games, and the larger this ratio is, the more powerful $\Phi^b$ is over $\Phi^a$.  In contrast,  $\texttt{PoS}(\Phi^b: \Phi^a) \leq 1$ for payoff-maximization games, and the smaller this ratio is, the more powerful $\Phi^b$ is over $\Phi^a$. If both the numerator and denominator are 0, we say the \texttt{PoS} is 1; if only the denominator is 0,  the \texttt{PoS} is $+\infty$.   

Though \texttt{PoS} is well-defined for any two classes of signaling schemes, in this paper we primarily consider the following well-studied  classes of signaling schemes: 

\begin{itemize}
	\item $\Phi^1$ or \texttt{FI}:  full information ; 
	\item  $\Phi^2$ or \texttt{Pub}: public signaling schemes; 
	\item  $\Phi^3$ or \texttt{Pri}: private signaling schemes; 
	\item  $\Phi^4$ or \texttt{exP}: ex-ante private signaling schemes. 
\end{itemize}  

The full information class \texttt{FI} only contains a single signaling scheme, i.e., fully revealing the state $\theta$. This serves as a benchmark scheme where information is not strategically signaled. Another natural benchmark scheme is to reveal \emph{no} information. We  show in   Appendix \ref{append:no-info} that the $\texttt{PoS}$  compared to this benchmark turns out to be  unbounded. 

\section{\texttt{PoS} in Cost-Minimization Games}\label{sec:cost}


The main result of this section is the following tight characterization about the \texttt{PoS} ratios in cost-minimization games. 
\begin{thm}\label{thm:cost}
	For any Bayesian  cost-minimization game $\{ G^{\theta} \}_{\theta \sim \lambda}$, let $\texttt{PoA}_{\max} = \max_{\theta} \texttt{PoA}(G^{\theta})$ denote the worst  \texttt{PoA} ratio among  game $G^{\theta}$s. We have 
	\begin{equation}\label{eq:pos-upper}
		\texttt{PoS}(\Phi^j:\Phi^i) \leq \texttt{PoA}_{\max}, \quad \forall \,  1\leq i < j \leq 4. 
	\end{equation}  
	Moreover, these upper bounds are all tight in the following sense: for any $r \geq 1$ and   $1\leq i < j \leq 4 $, there exits a Bayesian cost-minimization game $\{ G^{\theta} \}_{\theta \sim \lambda}$  with $\texttt{PoA}(G^{\theta}) = r$ for any $ \theta$ and $\texttt{PoS}(\Phi^j:\Phi^i) = r$ as well.  
\end{thm}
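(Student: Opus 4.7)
The plan is to prove the upper bound uniformly across all pairs $(i,j)$ by a sandwich argument, then verify tightness via three adjacent-class constructions that, by monotonicity, cover every pair.

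For the upper bound, I would use two bookends. First, for any signaling scheme $\varphi \in \Phi^4$, the induced (random) action profile in each realized state $\theta$ has cost at least $\text{OPT}^\theta := \min_{s \in S} C^\theta(s)$, so taking expectation over $\theta$ yields $\min_{\varphi \in \Phi^j} C(\varphi) \geq \text{OPT} := \sum_\theta \lambda(\theta)\,\text{OPT}^\theta$ for every $j \in \{1,2,3,4\}$. Second, since $\texttt{FI} \in \Phi^1 \subseteq \Phi^i$ for every $i$, we have $\min_{\varphi \in \Phi^i} C(\varphi) \leq C(\texttt{FI})$; under full information, players reach a worst-case NE of the revealed $G^\theta$, which by the definition of $\texttt{PoA}$ costs at most $\texttt{PoA}(G^\theta) \cdot \text{OPT}^\theta \leq \texttt{PoA}_{\max} \cdot \text{OPT}^\theta$. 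Summing over $\theta$ gives $C(\texttt{FI}) \leq \texttt{PoA}_{\max} \cdot \text{OPT}$, and dividing yields $\texttt{PoS}(\Phi^j : \Phi^i) \leq \texttt{PoA}_{\max}$ for all $1 \leq i < j \leq 4$.

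For tightness, I would first record the monotonicity property: whenever $i' \leq i < j \leq j'$, one has $\texttt{PoS}(\Phi^{j'}:\Phi^{i'}) \geq \texttt{PoS}(\Phi^j:\Phi^i)$, since shrinking the numerator's feasible class only raises the numerator and enlarging the denominator's class only lowers the denominator. Combined with the upper bound of $\texttt{PoA}_{\max}$, it then suffices to construct three Bayesian instances, one per adjacent pair $(\Phi^1,\Phi^2)$, $(\Phi^2,\Phi^3)$, $(\Phi^3,\Phi^4)$, each with $\texttt{PoA}(G^\theta) = r$ at every state and with the corresponding adjacent PoS equal to $r$; the sandwich between the upper bound and these adjacent constructions then forces equality for all six pairs. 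Concretely, I would aim for: an instance where revealing $\theta$ locks players into the worst NE of the realized $G^\theta$ while a pooled public signal yields a near-optimal profile; an instance where asymmetric private recommendations break a coordination barrier that no common message can; and an instance where the ex-ante obedience constraint strictly relaxes interim BCE obedience, enabling a strictly better scheme. Each would be a parameterized family in $r$ (e.g., by scaling cost coefficients) so that both the PoA of every $G^\theta$ and the adjacent PoS ratio can be tuned simultaneously.

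The main obstacle is the joint calibration inside the tightness step: every realized game must have PoA \emph{exactly} $r$ (not merely bounded by $r$), while the relevant adjacent PoS ratio must also hit $r$ exactly. The $(\Phi^3,\Phi^4)$ separation is likely the most delicate, since one must exhibit an action-recommendation scheme that satisfies the ex-ante constraint \eqref{eq:cce-obedience} yet strictly violates the interim BCE constraint \eqref{eq:obedience}, and then verify the worst-case equilibrium cost under each class carefully so the ratio $r$ emerges exactly.
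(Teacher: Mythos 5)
Your upper-bound argument is correct and is essentially identical to the paper's own (Theorem \ref{thm:cost-bound}): lower-bound the cost of \emph{every} scheme by the expected first-best $\sum_\theta \lambda(\theta)\, C^*(G^\theta)$, upper-bound the full-information cost by $\texttt{PoA}_{\max}$ times that same quantity via the per-state definition of \texttt{PoA}, and divide. Your monotonicity observation is a (slightly generalized) restatement of Fact \ref{fact1}, and your bookkeeping is the same as the paper's: tight witnesses for the three adjacent pairs $(\Phi^1,\Phi^2)$, $(\Phi^2,\Phi^3)$, $(\Phi^3,\Phi^4)$, sandwiched against the universal upper bound, supply a witness instance for each of the six pairs.

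The genuine gap is in the tightness half: you specify \emph{what} the three instances must accomplish but never exhibit them, and these constructions are the substantive content of the theorem --- they occupy three full subsections of the paper. You correctly identify the joint calibration (every realized game must have \texttt{PoA} exactly $r$ \emph{and} the adjacent \texttt{PoS} must equal $r$) as the obstacle, but leave it unresolved; note also that your phrase ``near-optimal profile'' for the stronger class would not suffice --- the stronger class must achieve the first-best \emph{exactly}, since the upper-bound sandwich pins the ratio at $r$ only if the denominator equals the expected first-best. The paper resolves the calibration with three Pigou-style non-atomic routing games, each with two states that are symmetric relabelings of one another (so all realized games share the single closed-form \texttt{PoA} of Equation \eqref{eq:flow-poa}, which sweeps $[1,\infty)$ continuously in the latency exponent $\alpha$), and in each instance the weaker class is shown to be stuck at cost exactly $1$ while the stronger class induces the optimal flow: for \texttt{PoS(Pub:FI)} the optimal public scheme happens to be no information; for \texttt{PoS(Pri:Pub)} \emph{every} public posterior induces a unique equilibrium of cost $1$, while privately informing a random $(1/(\alpha+1))^{1/\alpha}$ fraction of players and telling the rest nothing recovers the optimum; and for \texttt{PoS(exP:Pri)} a four-edge network is arranged so that the optimal interim-obedient (private) scheme is full revelation, while recommending the optimal flow is ex-ante obedient because opting out costs at least $(\alpha+1+\frac{1}{\alpha+1})/2 \geq 1$ --- exactly the interim-versus-ex-ante separation you anticipated, but which must be verified through an explicit equilibrium analysis. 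Without such instances and their worst-case equilibrium computations, your proposal proves Inequality \eqref{eq:pos-upper} but not the tightness claim.
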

\noindent
The remainder of this section is devoted to the proof of Theorem \ref{thm:cost}. The following simple observation follows from Definition \ref{def:pos} of the \texttt{PoS}, and will be useful for proving the tightness of the   bounds in Inequality \eqref{eq:pos-upper}.  
\begin{fact}\label{fact1}
	For any $1 \leq i < j < j' \leq 4$, we have  $$\texttt{PoS}(\Phi^j:\Phi^i) \leq \texttt{PoS}(\Phi^{j'}:\Phi^i).$$ 
\end{fact}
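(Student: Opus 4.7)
The plan is to prove Fact 1 directly from the definition of $\texttt{PoS}$ together with the basic monotonicity of the minimum of a function over a larger feasible set. The entire content of the fact is that enlarging the ``less restricted'' class in the denominator of the $\texttt{PoS}$ ratio can only drive the denominator down and hence the ratio up, since the numerator is unaffected.

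First I would verify the containment $\Phi^j \subseteq \Phi^{j'}$ for any $1 \leq j < j' \leq 4$. For $\Phi^1 \subseteq \Phi^2$, the full-information scheme is the public scheme with $\Sigma = \Theta$ and $\varphi(\theta;\theta) = 1$, so it belongs to $\Phi^2$. For $\Phi^2 \subseteq \Phi^3$, any public scheme is equivalent to a private scheme that sends the same signal coordinate to every player; because everyone's posterior is still the common posterior, the induced NE analysis is the same, and the obedience constraints \eqref{eq:obedience} reduce to the NE conditions for the induced average game, which are satisfied by the public scheme's equilibrium strategies (expressed as recommendations via the revelation principle). For $\Phi^3 \subseteq \Phi^4$, the ex-ante obedience constraint \eqref{eq:cce-obedience} is a strict weakening of the BCE obedience constraint \eqref{eq:obedience} (summing the latter over all $s_{-i}$ immediately yields the former), so every private scheme is also an ex-ante private scheme. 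Thus $\Phi^1 \subseteq \Phi^2 \subseteq \Phi^3 \subseteq \Phi^4$.

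Given the containment $\Phi^j \subseteq \Phi^{j'}$, monotonicity of the infimum over a larger set gives
\begin{equation*}
    \min_{\varphi \in \Phi^{j'}} C(\varphi) \;\leq\; \min_{\varphi \in \Phi^{j}} C(\varphi).
\end{equation*}
Since the numerator $\min_{\varphi \in \Phi^i} C(\varphi)$ appearing in both $\texttt{PoS}(\Phi^j:\Phi^i)$ and $\texttt{PoS}(\Phi^{j'}:\Phi^i)$ is the same (and non-negative because all costs are non-negative), dividing by a smaller non-negative denominator yields a larger ratio, which is precisely $\texttt{PoS}(\Phi^j:\Phi^i) \leq \texttt{PoS}(\Phi^{j'}:\Phi^i)$ by Definition \ref{def:pos}.

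The only subtlety, and the one real obstacle, is the degenerate case where the denominators are zero. If $\min_{\varphi \in \Phi^{j}} C(\varphi) = 0$, then by the containment $\min_{\varphi \in \Phi^{j'}} C(\varphi) = 0$ as well (it cannot be negative). The convention following Definition \ref{def:pos} then gives $\texttt{PoS}(\Phi^j:\Phi^i) = \texttt{PoS}(\Phi^{j'}:\Phi^i)$, equal to $1$ if the common numerator is also zero and $+\infty$ otherwise; in either sub-case the asserted inequality holds (with equality). This verifies the fact uniformly across all cases, and completes the argument.
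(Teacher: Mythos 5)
Your proposal is correct and takes essentially the same route as the paper, which states Fact~\ref{fact1} as an immediate consequence of Definition~\ref{def:pos}: the numerator $\min_{\varphi \in \Phi^i} C(\varphi)$ is fixed while the denominator is a minimum over a larger class $\Phi^{j'} \supseteq \Phi^j$ and hence can only decrease; you merely make explicit the containment chain $\Phi^1 \subseteq \Phi^2 \subseteq \Phi^3 \subseteq \Phi^4$ (which the paper presupposes when introducing the four classes) and the zero-denominator conventions. One trivial slip worth fixing: the ex-ante constraint \eqref{eq:cce-obedience} is obtained from the obedience constraints \eqref{eq:obedience} by summing over the recommended actions $s_i \in S_i$ with the deviation $s'_i$ held fixed --- the sum over $s_{-i}$ is already present in \eqref{eq:obedience} --- but this does not affect your conclusion that $\Phi^3 \subseteq \Phi^4$.
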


As a consequence of Fact \ref{fact1}, if we prove the tightness of $\texttt{PoS}(\Phi^2:\Phi^1)$, i.e., $\texttt{PoS(Pub:FI)}$ by constructing an example with $\texttt{PoS(Pub:FI)}=\texttt{PoA}_{\max}$, the example must satisfy  $\texttt{PoS(Pri:FI)} = \texttt{PoS(exP:FI)} =\texttt{PoA}_{\max}$ as well, implying their tightness. Therefore, to prove the tightness of Inequality \eqref{eq:pos-upper}, we only need to prove the tightness of \texttt{PoS(Pub:FI)}, \texttt{PoS(Pri:Pub)} and \texttt{PoS(exP:Pri)}. 

\subsection{A Simultaneous Proof of all the \texttt{PoS} Upper Bounds}

We first prove all  the upper bounds in Inequality \eqref{eq:pos-upper} through a unified result, summarized in the following theorem.   
\begin{thm}\label{thm:cost-bound}
	For any Bayesian  cost-minimization game $\{ G^{\theta} \}_{\theta \sim \lambda}$, we  have  $ \texttt{PoS}(\Phi^b:\Phi^a) \leq \max_{\theta} \texttt{PoA}(G^{\theta})$  for any two classes of signaling schemes $\Phi^a , \Phi^b$ satisfying $\Phi^a \subseteq \Phi^b$ and that the full information scheme is contained in $\Phi^a$.
\end{thm}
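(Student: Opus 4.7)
The plan is to bound the ratio by a two-sided sandwich of $\min_{\varphi \in \Phi^a} C(\varphi)$ and $\min_{\varphi \in \Phi^b} C(\varphi)$ against a common reference quantity, the \emph{omniscient per-state optimum}
\[
C^* \;:=\; \sum_{\theta \in \Theta} \lambda(\theta) \min_{s \in S} C^\theta(s),
\]
i.e.\ what a planner who sees the realized state and directly controls all players could achieve. Writing $r := \texttt{PoA}_{\max}$, I will establish (i) $\min_{\varphi \in \Phi^b} C(\varphi) \geq C^*$ and (ii) $\min_{\varphi \in \Phi^a} C(\varphi) \leq r \cdot C^*$; dividing then yields $\texttt{PoS}(\Phi^b : \Phi^a) \leq r$.

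For part (i), any signaling scheme $\varphi$, together with the equilibrium action profile induced after each signal, defines a joint distribution on $(\theta, s)$ whose $\theta$-marginal is the prior $\lambda$. This is the one place where it matters that the argument is scheme-agnostic: whether $\varphi$ is public (a posterior-induced NE per signal), private (a BCE), or ex-ante private (a CCE-type obedient distribution), conditioning on the realized state is well-defined and the $\theta$-marginal is unchanged from $\lambda$. Hence
\[
C(\varphi) \;=\; \sum_{\theta} \lambda(\theta)\, \Ex[\,C^\theta(s) \mid \theta\,] \;\geq\; \sum_\theta \lambda(\theta) \min_{s \in S} C^\theta(s) \;=\; C^*,
\]
and the inequality is preserved after taking $\min$ over $\Phi^b$.

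For part (ii), since $\texttt{FI} \in \Phi^a$ by hypothesis, it suffices to bound $C(\texttt{FI})$. Under \texttt{FI} the signal reveals $\theta$ exactly, so conditioned on each realized $\theta$ the players play a Nash equilibrium of the full-information game $G^\theta$. By the worst-NE convention combined with the \texttt{PoA} definition \eqref{eq:poa}, the conditional cost is at most $\texttt{PoA}(G^\theta) \cdot \min_{s \in S} C^\theta(s) \leq r \cdot \min_{s \in S} C^\theta(s)$. Averaging over $\theta \sim \lambda$ gives $C(\texttt{FI}) \leq r \cdot C^*$, and hence $\min_{\varphi \in \Phi^a} C(\varphi) \leq r \cdot C^*$.

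The main thing to verify --- and the only real subtlety --- is that the lower bound in (i) is genuinely uniform across all three signaling classes; once that is in place, the whole theorem is essentially the one-line division $C(\texttt{FI})/C^* \le r$, and the hypothesis $\texttt{FI} \in \Phi^a$ is used in exactly one spot. The only edge case is $C^* = 0$: part (i) then forces $\min_{\varphi \in \Phi^b} C(\varphi) \geq 0$ and part (ii) forces $\min_{\varphi \in \Phi^a} C(\varphi) \leq 0$, so by non-negativity of costs both minima equal $0$ and the $\texttt{PoS}$-equals-$1$ convention from Definition~\ref{def:pos} handles the ratio; if $r = \infty$, the claimed bound is vacuous.
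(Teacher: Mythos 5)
Your proof is correct and follows essentially the same route as the paper's: both arguments sandwich the two optima against the omniscient per-state optimum $\sum_{\theta}\lambda(\theta)\min_{s\in S} C^{\theta}(s)$, lower-bounding \emph{every} scheme in $\Phi^b$ by this quantity and upper-bounding the full-information scheme in $\Phi^a$ by $\texttt{PoA}_{\max}$ times it via the per-state \texttt{PoA} definition. Your explicit justification that the lower bound is uniform across public, private, and ex-ante private schemes, and your handling of the $C^*=0$ edge case, are slightly more careful than the paper's one-line assertions, but the argument is identical in substance.
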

\begin{proof} 
	
	Let $\texttt{PoA}_{\max} = \max_{\theta} \texttt{PoA}(G^{\theta})$ be the worst  (i.e., the maximum) price of anarchy ratio among game $G^{\theta}$s.  
	Denote by $C^*(G_\theta) = \min_{s\in S} C(S)$  the minimum total social cost among all outcomes (not necessarily an equilibrium) for   game $G_\theta$; let $s^{\theta*}$ be an strategy profile that achieves  $C^*(G_\theta)$. $\varphi^0$ denotes the full information revelation scheme. 
	
	Observe that for any signaling scheme $\varphi$ we must have $C(\varphi) \geq   \sum_{\theta} \lambda(\theta) C^*(G^{\theta}) $ because regardless of how players act in the scheme $\varphi$, its expected total cost can never be less than the minimum possible total cost $\sum_{\theta} \lambda(\theta) C^*(G^{\theta})$.  Now since $\Phi^a$ contains $\varphi^0$, we thus have  
	\begin{eqnarray*}
		\min_{\varphi' \in \Phi^a} C(\varphi') &\leq& C(\varphi^0) \\
		&=&  \sum_{\theta \in \Theta} \lambda(\theta) C(G_\theta) \\ 
		&\leq& \sum_{\theta \in \Theta} \lambda(\theta)\cdot  r C^*(G_\theta) \\
		& \leq & r C(\varphi), \text{ for any scheme }\varphi  
	\end{eqnarray*}
	where $C(G_\theta)$ is the worst (i.e., maximum) equilibrium cost of game $G^{\theta}$ and the second inequality is by the definition of the price of anarchy. As a result, $\texttt{PoS}(\Phi^b:\Phi^a) = \frac{\min_{\varphi' \in \Phi^a} C(\varphi')}{ \min_{\varphi \in \Phi^b} C(\varphi) } \leq r$, as desired.
\end{proof}

\subsection{Tightness of the Upper-Bound for \texttt{PoS(Pub:FI)}}  

{\bf Non-Atomic Routing.} 
It turns out that all the \texttt{PoS} bounds in Theorem \ref{thm:cost}  are tight in a special and well-studied class of cost-minimization games, i.e., non-atomic routing.  The game takes place on a directed graph $G=(V,E)$ with $V$ as the vertex set and $E$ as the edge set.  There is a continuum of players, each controlling a negligible amount of flow characterized by a pair of nodes $(s,t)$ where $s\in V$ is the starting node of the flow and $t\in V$ is its destination.  
In non-atomic routing with incomplete information, each edge $e\in E$ can be described by a congestion function $c^{\theta}_e(x)$ which depends on the total amount of flow $x$ on   edge $e$ as well as a random state of nature $\theta \in \Theta$.  Each player $(s,t)$  optimizes her own utility by taking a minimum-cost directed path from $s$ to $t$. The sender minimizes overall congestion cost. There is an  essentially unique pure Nash equilibrium for non-atomic routing under a public scheme. Therefore, equilibrium selection is not an issue in non-atomic routing.

We now show the tightness of \texttt{PoS(Pub:FI)} for any ratio $r \geq 1$ via a non-atomic routing game example, which implies the tightness of  \texttt{PoS(Pri:FI)}  and  \texttt{PoS(exP:FI)}   by Fact \ref{fact1}.  Consider a variant of   Pigou's example \cite{pigou2013economics},  as depicted in Figure \ref{fig:cost-tight1},  where cost functions are described on each edge. The traffic demand from $s_1$ to $t$ is set as $d(\alpha) = \left(\frac{1}{\alpha+1} \right)^\frac{1}{\alpha}$ whereas the demand from $s_2$ to $t$ is $1 -d(\alpha)$.  Each state of nature occurs with probability $0.5$. 

\begin{figure}[h]
	\captionsetup[subfigure]{labelformat=empty}
	\scalebox{1}{
		\begin{subfigure}{.55\textwidth}
			\centering
			\begin{tikzpicture}
			\SetVertexMath
			\Vertex[x=0, y=0, L=s_2]{s_2}
			\Vertex[x=3, y=1, L=t]{t}
			\Vertex[x = 0, y=2, L=s_1]{s_1}
			
			\tikzstyle{EdgeStyle}=[pre]
			\Edges[label={$x^\alpha$}](t, s_1)    
			\tikzstyle{EdgeStyle}=[pre]
			\Edges[label={$1$}](t, s_2)
			\tikzstyle{EdgeStyle}=[pre, bend right = 40]
			\Edges[label={$2$}](s_1, s_2)
			\tikzstyle{EdgeStyle}=[pre, bend left = 40]
			\Edges[label={$0$}](s_1, s_2)
			
			\end{tikzpicture}
			\captionof{figure}{$\theta_1$}
	\end{subfigure}}
	\scalebox{1}{
		\begin{subfigure}{.3\textwidth}
			\centering
			\begin{tikzpicture}
			\SetVertexMath
			\Vertex[x=0, y=0, L=s_2]{s_2}
			\Vertex[x=3, y=1, L=t]{t}
			\Vertex[x = 0, y=2, L=s_1]{s_1}
			
			\tikzstyle{EdgeStyle}=[pre]
			\Edges[label={$x^\alpha$}](t, s_1)    
			\tikzstyle{EdgeStyle}=[pre]
			\Edges[label={$1$}](t, s_2)
			\tikzstyle{EdgeStyle}=[pre, bend right = 40]
			\Edges[label={$0$}](s_1, s_2)
			\tikzstyle{EdgeStyle}=[pre, bend left = 40]
			\Edges[label={$2$}](s_1, s_2)
			\end{tikzpicture}
			\captionof{figure}{$\theta_2$}
	\end{subfigure}}
	\caption{\vspace{-2mm} A Tight Example for \texttt{PoS(Pub:FI)} \label{fig:cost-tight1}}
\end{figure}

First, we compute the price of anarchy (\texttt{PoA}) of each game, as a function of $\alpha$. Clearly, at equilibrium no flow will pass through the edge with cost 2 since deviating to the edge with cost 0 is strictly better. It is easy to see that at equilibrium all flow will go through the edge with cost $x^\alpha$, leading to total congestion $1$ at equilibrium. The optimal flow, however, is that all flow at $s_1$ goes through edge $(s_1, t) $  and all flow at $s_2$ goes through $(s_2, t)$, leading to minimum total congestion   $\left(\frac{1}{\alpha+1}\right)^{\frac{\left(\alpha+1\right)}{\alpha}}+1\ -\ \left(\frac{1}{\alpha+1}\right)^{\frac{1}{\alpha}}$. Therefore, the \texttt{PoA} as a function of $\alpha$ in this instance is 
\begin{equation}\label{eq:flow-poa}
	\texttt{PoA} = \frac{1}{\left(\frac{1}{\alpha+1}\right)^{\frac{\left(\alpha+1\right)}{\alpha}}+1\ -\ \left(\frac{1}{\alpha+1}\right)^{\frac{1}{\alpha}}}, 
\end{equation} which is a continuous function of $\alpha > 0$. Standard analysis shows that this function tends to $\infty$ as $\alpha \to \infty$ and tends to $1$ as  $\alpha \to 0^+$.  For the special case of $\alpha=0$, it can be directly verified that the \texttt{PoA} ratio is  1. Therefore, this \texttt{PoA} ratio can take any value $r\geq 1$ with a proper choice of $\alpha$.

We now consider the cost of the full information scheme \texttt{FI}. In this case, all flow will always go through the edge with cost $1$ at equilibrium, leading to total cost $1$. The optimal public scheme in this example happens to be revealing no information. Without being able to distinguish the zero-cost edge from the edge of cost $2$, all flow at $s_2$ will take the $(s_2,t)$ path. This achieves the minimum total cost, rendering the \texttt{PoS(Pub:FI)}  ratio equal the \texttt{PoA} for any $n$.  

\subsection{Tightness of the Upper-Bound for \texttt{PoS(Pri:Pub)}}\label{sec:cost:pri} 
We now show the tightness of \texttt{PoS(Pri:Pub)}  for any ratio $r \geq 1$, which implies the tightness of \texttt{PoS(exP:Pub)}  by Fact \ref{fact1}. We construct a Bayesian non-atomic routing game as depicted in Figure \ref{fig:cost-tight2}, which can be viewed as another variant of Pigou's example.\footnote{This example generalizes an earlier example observed by Cheng, Dughmi and Xu \cite{dughmi2017survey}. It also avoided their use of the (unrealistic) $\infty$ flow cost, and relies on  a less trivial analysis due to the finite edge cost.} There is a $1$ unit of flow demand from  $s$ to $t$. Each state has equal probability $0.5$. 
\begin{figure}[h]
	\captionsetup[subfigure]{labelformat=empty}
	\scalebox{1}{
		\begin{subfigure}{.55\textwidth}
			\centering
			\begin{tikzpicture}
			\SetVertexMath
			\Vertex[x=0, y=0, L=s]{s}
			\Vertex[x=3, y=0, L=t]{t}
			
			\tikzstyle{EdgeStyle}=[pre, bend right=40]
			\Edges[label={2}](t, s)
			\tikzstyle{EdgeStyle}=[pre, bend left=40]
			\Edges[label={$1$}](t, s)
			\tikzstyle{EdgeStyle}=[pre]
			\Edges[label = $x^\alpha$](t,s)
			
			\end{tikzpicture}
			\captionof{figure}{$\theta_1$}
	\end{subfigure}}
	\scalebox{1}{
		\begin{subfigure}{.3\textwidth}
			\centering
			\begin{tikzpicture}
			\SetVertexMath
			\Vertex[x=0, y=0, L=s]{s}
			\Vertex[x=3, y=0, L=t]{t}
			
			\tikzstyle{EdgeStyle}=[pre, bend right=40]
			\Edges[label={$x^\alpha$}](t, s)
			\tikzstyle{EdgeStyle}=[pre, bend left=40]
			\Edges[label={$1$}](t, s)
			\tikzstyle{EdgeStyle}=[pre]
			\Edges[label = 2](t,s)
			\end{tikzpicture}
			\captionof{figure}{$\theta_2$}
	\end{subfigure}}
	\caption{A Tight Example for \texttt{PoS(Pri:Pub)} \label{fig:cost-tight2}}
\end{figure}

Similarly to the calculation for the Example in Figure  \ref{fig:cost-tight1},  the \texttt{PoA} for each game here also equals that as described in Equation \eqref{eq:flow-poa}. We now argue that the expected cost of any public signaling scheme will equal $1$ in this example --- i.e., all public schemes are equally bad and will not be able to reduce any congestion.  Any public signal gives the same information about the game state to all players. Let $\lambda \in [0,1]$ denote the posterior probability of $\theta_1$ given any public signal. W.l.o.g., consider the case $\lambda \geq  0.5$ since the other case is symmetric. The top edge will have expected cost $2\lambda   +  x^\alpha(1-\lambda) > 1$ for any $x>0$, therefore this edge will never be taken since the bottom edge is a strictly better choice. Consequently, players will be choosing between the middle edge, with expected cost function $f(x) =:  x^\alpha \lambda + 2(1-\lambda)$, and the bottom edge with fixed cost $1$. Note that $f(0) = 2(1-\lambda)\leq 1$ and   $f(1) = 1-\lambda \geq 1$. Therefore, at the unique equilibrium,  the amount of flow through the middle edge will be exactly the $x^*$ such that $f(x^*) = 1$ whereas the remaining flow will be through the bottom edge. The expected total cost at this equilibrium is $1$.      


Finally, we show that the optimal private signaling  will be able to induce the optimal flow, concluding our tightness proof. Consider the following private signaling scheme: revealing full information to a randomly selected   $x^* = \left(1/(\alpha+1)\right)^{1/\alpha}$ fraction of the players, and revealing no information to the remaining players. The $x^*$ fraction of players given the full information has a   dominant action of taking the edge with cost $x^\alpha$ at the told state. For the remaining players with no information, their cost of taking either the top or the middle edge will be at least $2 \times (1/2) + (x^*)^\alpha \times (1/2) > 1$. Therefore, their optimal response will be taking the bottom edge. This leads to exactly the optimal flow for each state, as desired.

\subsection{Tightness of the Upper-Bound for \texttt{PoS(exP:Pri)}}

Finally, we prove the tightness of $\texttt{PoS(exP:Pri)}$.  Consider the non-atomic routing game depicted in Figure \ref{fig:cost-tight3}.\footnote{This example generalizes an earlier example observed by Cheng \cite{Cheng-example} after removing (unrealistic) $\infty$ cost edges. Note that the analysis of the example becomes less obvious without the use of $\infty$ costs.}  There is one unit of flow from $s$ to $t$ and the two states $\theta_1, \theta_2$ occurs with equal probability $0.5$.  
\begin{figure}[ht]
	\captionsetup[subfigure]{labelformat=empty}
	\scalebox{1}{
		\begin{subfigure}{.55\textwidth}
			\centering
			\begin{tikzpicture}
			\SetVertexMath
			\Vertex[x=0, y=0, L=s]{s}
			\Vertex[x=3, y=0, L=t]{t}
			
			\tikzstyle{EdgeStyle}=[pre, bend right=20]
			\Edges[label={$\alpha+1$}](t, s)
			\tikzstyle{EdgeStyle}=[pre, bend right=50]
			\Edges[label={$\alpha+1$}](t, s)
			\tikzstyle{EdgeStyle}=[pre, bend left=20]
			\Edges[label={$1$}](t, s)
			\tikzstyle{EdgeStyle}=[pre, bend left = 50]
			\Edges[label = $x^\alpha$](t,s)
			
			\end{tikzpicture}
			\captionof{figure}{$\theta_1$}
	\end{subfigure}}
	\scalebox{1}{
		\begin{subfigure}{.3\textwidth}
			\centering
			\begin{tikzpicture}
			\SetVertexMath
			\Vertex[x=0, y=0, L=s]{s}
			\Vertex[x=3, y=0, L=t]{t}
			
			\tikzstyle{EdgeStyle}=[pre, bend left=20]
			\Edges[label={$\alpha+1$}](t, s)
			\tikzstyle{EdgeStyle}=[pre, bend left=50]
			\Edges[label={$\alpha+1$}](t, s)
			\tikzstyle{EdgeStyle}=[pre, bend right=20]
			\Edges[label={$1$}](t, s)
			\tikzstyle{EdgeStyle}=[pre, bend right = 50]
			\Edges[label = $x^\alpha$](t,s)
			\end{tikzpicture}
			\captionof{figure}{$\theta_2$}
	\end{subfigure}}
	\caption{A Tight Example for \texttt{PoS(exP:Pri)} \label{fig:cost-tight3}}
\end{figure}

Similar to the analysis for previous examples, the \texttt{PoA} for each game equals also the function described in Equation \eqref{eq:flow-poa}, which takes value in $[1, \infty)$ as we vary the parameter $\alpha$. 


Next we argue that the optimal private signaling scheme is full-revelation, with cost 1. Recall from the preliminary section, any private scheme can be viewed as obedient action recommendations. Numbering the edges from the top to the bottom as edge $1,2,3,4$, we claim that any obedient action recommendation should never recommend edge $2$ and $3$. This is because if a non-zero amount of players are recommended, e.g., to edge $2$,  switching to edge $1$ or $4$  will be strictly better. In particular, if the players are certain that they are at state $\theta_1$, they will prefer to switch to edge $4$. Otherwise, there is non-zero probability that they are at state $\theta_2$. In this case, switching to edge $1$ is strictly better. Consequently, the optimal private scheme can be captured by two variables: (1) $x$: the amount of flow recommended to edge $4$ at state $\theta_1$ (thus edge $1$ consumes the remaining $1-x$ amount); (2) $y$:  the amount of flow recommended to edge $1$ at $\theta_2$. It can be shown that the parameterized total expected cost  $[x^{\alpha+1} + (\alpha+1)(1-x) + y^{\alpha+1} + (\alpha+1)(1-y)]/2$  is minimized at $x = 1, y=1$, i..e, the full information scheme.  


Finally, we show that the optimal ex-ante private scheme achieves the minimum possible total social cost, 
concluding the tightness proof of \texttt{PoS(exP:Pri)}. In particular, consider the ex-ante private scheme that induces the optimal flow by recommending a randomly chosen $\left(1/(\alpha+1)\right)^{1/\alpha}$ amount of the flow to the edge with cost $x^\alpha$ at any state and the remaining flow to the edge with cost $1$. This is indeed obedient in the ex-ante sense because opting out from this scheme and taking any path will lead to cost at least $ (\alpha+1 + \frac{1}{ \alpha+1} )/2$,  which is at least $1$ and thus  is larger than its expected utility  in the scheme (less than $1$).

\section{\texttt{PoS} in Payoff-Maximization Games}

In this section, we show that a similar tight characterization as in Theorem \ref{thm:cost} holds for payoff-maximization games as well.

\begin{thm}\label{thm:payoff}
	For any Bayesian payoff-maximization game $\{ G^{\theta} \}_{\theta \sim \lambda}$, let $\texttt{PoA}_{\min} = \min_{\theta} \texttt{PoA}(G^{\theta})$ denote the worst-case \texttt{PoA} ratio among  game $G^{\theta}$s. We have
	\begin{equation}\label{eq:pos-lower}
		\texttt{PoS}(\Phi^j:\Phi^i) \geq \texttt{PoA}_{\min}, \quad \forall 1\leq i < j \leq 4. 
	\end{equation}  
	Moreover, these lower bounds are all tight in the following sense: for any $r \in (0, 1]$ and any $1\leq i < j \leq 4 $, there exits a Bayesian payoff-maximization  game $\{ G^{\theta} \}_{\theta \sim \lambda}$  with $\texttt{PoA}(G^{\theta}) = r$ for any $ \theta$ and $\texttt{PoS}(\Phi^j:\Phi^i) = r$ as well.  
\end{thm}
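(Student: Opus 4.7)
The plan is to mirror the structure of the proof of Theorem \ref{thm:cost}, adapting every step to the payoff-maximization setting. The proof naturally splits into a unified lower bound (the analog of Theorem \ref{thm:cost-bound}) and three tight constructions for the adjacent pairs of signaling classes.

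For the lower bound in Inequality \eqref{eq:pos-lower}, I would run the dual of the argument in Theorem \ref{thm:cost-bound}. The key observation is that for any signaling scheme $\varphi$, the expected welfare it induces is capped by the ex-post first-best,
\[
U(\varphi) \ \leq\ \sum_{\theta \in \Theta} \lambda(\theta) \cdot U^*(G^\theta), \qquad U^*(G^\theta) := \max_{s \in S} U(s).
\]
Moreover, the full-information scheme $\varphi^0 \in \Phi^1 \subseteq \Phi^a$ yields
\[
U(\varphi^0) \ =\ \sum_{\theta \in \Theta} \lambda(\theta) \cdot U(G^\theta) \ \geq\ \texttt{PoA}_{\min} \cdot \sum_{\theta \in \Theta} \lambda(\theta) \cdot U^*(G^\theta),
\]
where $U(G^\theta)$ is the worst (minimum-welfare) Nash equilibrium of $G^\theta$ and the inequality is by the definition of \texttt{PoA} in payoff-maximization games. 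Chaining the two bounds gives $\max_{\varphi' \in \Phi^a} U(\varphi') \geq \texttt{PoA}_{\min} \cdot U(\varphi)$ for every $\varphi \in \Phi^b$, and taking $\max$ over $\varphi \in \Phi^b$ delivers $\texttt{PoS}(\Phi^b{:}\Phi^a) \geq \texttt{PoA}_{\min}$.

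For tightness, I would first record the payoff analog of Fact \ref{fact1}: since the denominator in the \texttt{PoS} ratio is a maximum, enlarging the less-restricted class can only decrease the ratio, so $\texttt{PoS}(\Phi^j{:}\Phi^i) \geq \texttt{PoS}(\Phi^{j'}{:}\Phi^i)$ whenever $j < j'$. This reduces the tightness claim to three adjacent pairs \texttt{(Pub:FI)}, \texttt{(Pri:Pub)} and \texttt{(exP:Pri)}. For each, my strategy is to produce a payoff-maximization instance whose signaling landscape mirrors the corresponding cost example in Section \ref{sec:cost}. A natural recipe is to apply an additive transformation $u_e(x) := B - c_e(x)$ to each of the routing instances of Figures \ref{fig:cost-tight1}--\ref{fig:cost-tight3}, with $B$ chosen large enough to enforce non-negativity and to calibrate the payoff-side \texttt{PoA} to any target $r \in (0,1]$. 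Because each non-atomic player on these two-edge Pigou variants takes exactly one edge, the total welfare is $B$ (times the unit total mass) minus the total cost, so the same flow remains an equilibrium in both formulations, the same public/private/ex-ante scheme remains optimal, and both $\texttt{PoA}$ and the relevant \texttt{PoS} ratio evaluate to the same transformed expression $(B-1)/(B-1/r')$, where $r'$ denotes the \texttt{PoA} of the underlying cost instance. Tuning $\alpha$ in $c(x)=x^\alpha$ together with $B$ lets me hit any desired $r \in (0,1]$.

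The main obstacle is reconciling the non-negativity constraint on the transformed payoffs with the need to hit an arbitrary target ratio. The naive choice of $B$ just large enough to keep the defining ratios equal may leave certain alternative-path payoffs negative (e.g., the cost-$2$ ``barrier'' edges of Figures \ref{fig:cost-tight1}--\ref{fig:cost-tight3} become $B-2$, which can be negative when $r$ is small). I anticipate the fix is to first rescale all edge costs in the cost example by a common factor so that every cost is at most some fixed constant and every cost separation that drives the equilibrium analysis is preserved proportionally; then $B$ can be chosen to simultaneously enforce non-negativity and produce the desired payoff-side \texttt{PoA} and \texttt{PoS}. Verifying state by state that (i) full information still induces the worst equilibrium, (ii) the optimal public (resp.\ private, ex-ante private) scheme from Section \ref{sec:cost} remains optimal after the additive shift, and (iii) all payoffs stay in $[0,\infty)$, is the bulk of the bookkeeping but introduces no conceptually new ideas beyond those already developed in Sections \ref{sec:cost:pri} and preceding subsections.
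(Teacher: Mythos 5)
Your unified lower-bound argument is correct and is exactly the dual of Theorem~\ref{thm:cost-bound} that the paper invokes for Inequality~\eqref{eq:pos-lower} (the paper omits it as analogous), and your payoff analog of Fact~\ref{fact1} correctly reduces tightness to the three adjacent pairs. The genuine gap is in your tightness constructions: the reverse-routing transformation $u_e(x) = B - \kappa c_e(x)$ provably cannot reach all $r \in (0,1]$, and the paper devotes Appendix~\ref{append:reverse-route} to explaining exactly why this route fails before building entirely new finite matrix games instead (a $2\times 2$ game for \texttt{PoS(Pub:FI)}, the ``robber's game'' for \texttt{PoS(Pri:Pub)}, and a variant of it for \texttt{PoS(exP:Pri)}). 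Concretely, non-negativity forces $B \geq \kappa \cdot \max_e c_e$, and on the parallel-link instances the transformed \texttt{PoA} equals $(B - \kappa C_{\mathrm{eq}})/(B - \kappa C^*)$ with equilibrium cost $C_{\mathrm{eq}} = 1$ and $C^* \in (0,1)$; this is bounded below by $1 - \kappa/B \geq 1 - C_{\mathrm{eq}}/\max_e c_e$, which equals $1/2$ for Figure~\ref{fig:cost-tight2} (barrier cost $2$) and $\alpha/(\alpha+1)$ for Figure~\ref{fig:cost-tight3} --- and the latter tends to $1$, not $0$, precisely when $\alpha$ grows to make the cost-side \texttt{PoA} large. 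Your proposed fix of rescaling all costs by a common factor $\kappa$ is powerless here, because the floor depends only on the scale-invariant ratio $\max_e c_e / C_{\mathrm{eq}}$; and shrinking the barrier cost itself (say from $2$ to $1+\delta$) destroys the equilibrium analysis you are trying to import --- e.g., the claim that every public scheme in Figure~\ref{fig:cost-tight2} has cost $1$ uses that the top edge's posterior cost $2\lambda + x^\alpha(1-\lambda)$ exceeds $1$ for all $\lambda \geq 1/2$, which fails once $2$ is replaced by anything smaller. So no choice of $\alpha, \kappa, B$ yields $r \leq 1/2$, and the ``bookkeeping'' you defer is not completable.

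There is a second, independent failure: your premise that each non-atomic player traverses exactly one edge is false for Figure~\ref{fig:cost-tight1}, which is not a parallel-link network (flow at $s_2$ can route through $s_1$). In the reversed game a player's payoff is $\sum_{e \in P}\bigl(B - \kappa c_e(x_e)\bigr)$, which \emph{increases} with path length, so both individual players and the welfare optimum favor long paths (and cycles, where present); neither the equilibria nor the optimal flows of the cost instance are preserved. This is the paper's first stated reason in Appendix~\ref{append:reverse-route} for abandoning the transformation. The correct repair is not a calibration of $B$ and $\kappa$ but new constructions: the paper's games are engineered so that full information induces a dominant-strategy (or provably worst) equilibrium with welfare roughly $2$ against a first best of roughly $1+\alpha$, while the stronger scheme --- a no-information public signal, a private informed/uninformed split across the two players, or the ex-ante recommendation of $(S_2,C_2)$ at $\theta_1$ and $(S_1,C_1)$ at $\theta_2$ --- attains the first best exactly, with the parameters $\alpha, \epsilon$ sweeping the common \texttt{PoA} and \texttt{PoS} value over all of $(0,1]$.
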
 


\begin{remark}
	In payoff-maximization games, the smaller \texttt{PoS} is, the more powerful signaling is. Therefore, Inequality \eqref{eq:pos-lower} is a lower bound for the  \texttt{PoS} ratio but an upper bound for the power of signaling. Similar discrepancies also arise in the definition of the \texttt{PoA} for payoff-maximization game \cite{roughgarden2015intrinsic}. 
\end{remark} 

The remaining of this section is devoted to the proof of Theorem \ref{thm:payoff}. A simultaneous proof of all the \texttt{PoS} lower bounds in Inequality \eqref{eq:pos-lower} follow an analogous argument as that for Theorem \ref{thm:cost-bound}, and thus is omitted here. We only prove their tightness. One might wonder whether the tightness proof here can be adapted from that for cost-minimization games by simply reversing minimizing cost functions to be maximizing their negations (plus a large constant to make it positive). The answer turns out to be \emph{no}. We illustrate the detailed reasons in the Appendix \ref{append:reverse-route}, but at a high level there are at least two reasons. First, the optimal flow for congestion minimization may not be optimal any more in the negation of the game. Second, some \texttt{PoA} ratios cannot be achieved in the negation of routing games.    

Our tightness proof here  requires   carefully constructed  payoff-maximization games and analysis. Thanks to Fact \ref{fact1}, we will only need to prove the tightness of \texttt{PoS(Pub:FI)}, \texttt{PoS(Pri:Pub)} and \texttt{PoS(exP:Pri)}.

\subsection{Tightness of the Lower-Bound for \texttt{PoS(Pub:FI)}  }
Consider the following game  played by two players P1, P2  where $\alpha, \epsilon$ are parameters satisfying $\alpha - 1 >  2\epsilon > 0$. Each player has two actions, conveniently denoted as $A, B$. There are two states $\theta_1, \theta_2$ with equal probability $0.5$. The only  difference of the two states is the payoffs for action profile $(B,A)$, which is $(1,\alpha)$ at state $\theta_1$ and $(\alpha,1)$ at state $\theta_2$.  
\begin{figure}[h]
	\centering
	\begin{subfigure}{.4\textwidth}
		\centering
		\begin{tabular}{cc|c|c|c|}
			& \multicolumn{1}{c}{} & \multicolumn{2}{c}{P2}\\
			& \multicolumn{1}{c}{} & \multicolumn{1}{c}{$A$}  & \multicolumn{1}{c}{$B$} \\ \cline{3-4}
			\multirow{2}*{P1} & $A$ & $(1+\epsilon,1)$ & $(1-\epsilon,1-\epsilon)$\\
			\cline{3-4} & $B$ & $(1,\alpha)$ & $(1,1+\epsilon)$\\ \cline{3-4}
		\end{tabular}
		\caption{\qquad \qquad \quad $\theta_1$ \vspace{3mm}}
	\end{subfigure}
	\qquad 
	\begin{subfigure}{.45\textwidth}{
			\centering
			\begin{tabular}{cc|c|c|c|}
				& \multicolumn{1}{c}{} & \multicolumn{2}{c}{P2}\\
				& \multicolumn{1}{c}{} & \multicolumn{1}{c}{$A$}  & \multicolumn{1}{c}{$B$} \\ \cline{3-4}
				\multirow{2}*{P1} & $A$ & $(1+\epsilon,1)$ & $(1-\epsilon,1-\epsilon)$\\
				\cline{3-4} & $B$ & $(\alpha, 1)$ & $(1,1+\epsilon)$\\ \cline{3-4}
		\end{tabular}}
		\caption{ \qquad $\theta_2$ \vspace{3mm}}
	\end{subfigure}
\end{figure}

We first consider full information (\texttt{FI}). At state $\theta_1$,  action $A$ is a strictly dominant action for Player 2. This implies that both players choosing action $A$ is the only  Nash equilibrium, resulting in sender utility $2+\epsilon$. However, the optimal outcome at state $\theta_1$ is that Player 1 chooses $B$ and Player 2 chooses $A$, leading to total payoff $\alpha+1$ ($> 2+2\epsilon$ since $\alpha-1 > 2 \epsilon$). State $\theta_2$ is symmetric. The expected utility of full information  is  $2 + \epsilon$, and the \texttt{PoA} of each game is $\frac{2+\epsilon}{\alpha+1}$.  

We now show that the optimal public signaling scheme can maximize total payoff. Consider the  scheme which reveals no information at all. The only change is that the expected payoffs of action profile $(B,A)$ becomes $\frac{\alpha+1}{2}$ for both players. We see then, that $A$ remains a strictly dominant strategy for Player 2 and  $B$ becomes a strictly dominant action for Player 1. The  only equilibrium  is the action profile $(B,A)$, resulting in  expected sender utility $\alpha + 1$. Therefore, $\texttt{PoS(Pub:FI)}=\frac{2+ \epsilon}{\alpha+1}$, equaling the price of anarchy. 
If $\alpha \to 1$, the \texttt{PoS} ratio tends to 1, and  $\alpha \to \infty$ with a fixed $\epsilon$ makes the \texttt{PoS} ratio continuously tend to 0. Setting $\alpha = 1+\epsilon$ gives a trivial case where the \texttt{PoS} and \texttt{PoA} are both 1. Thus, the ratio achieves all possible values within $(0, 1]$.  

\subsection{Tightness of the Lower-Bound for \texttt{PoS(Pri:Pub)}}


{\bf The Robber's Game.}  As a means of illustrating our constructed game, consider two robbers robbing a bank. They have triggered the alarm, and are pressed for time. As they enter the vault, they have to make a decision. In the vault are two \emph{safes}, and a huge pile of cash. There are three options: attempt to crack Safe 1 (action $S_1$), Safe 2 (action $S_2$), or simply take as much cash as they can carry (action $C$), worth $\alpha$. In order to protect the bank's valuable items, one of the safes is a decoy safe, and is \emph{empty}. Inside the other safe are two objects: a gold bar worth $\alpha + \epsilon$, where $1 \geq  \alpha > \epsilon > 0$, and an extremely delicate but valuable crystal worth $1$. The two robbers (players) have very different skill sets. Player 1 (P1) is a lock picking expert, and Player 2  (P2) is a demolitions expert. The players' payoffs equal whatever they individually steal from the vault. Only P2 is capable of carrying the crystal without breaking it. If P1 cracks the safe using his lockpicking, P2 can take the crystal on the way out, leaving P1 with the gold bar. However, if P2 cracks the safe, the crystal will be destroyed due to his explosions, leaving him only the gold bar. In addition, if both players try to crack the same safe, they get in each others' way and are forced to leave with nothing. 

\begin{figure}[h]
	\centering
	\scalebox{.9}{
		\begin{subfigure}{.4\textwidth}
			\begin{tabular}{cc|c|c|c|}
				& \multicolumn{1}{c}{} & \multicolumn{3}{c}{P$2$} \\
				& \multicolumn{1}{c}{} & \multicolumn{1}{c}{$S_1$}  & \multicolumn{1}{c}{$C$}  & \multicolumn{1}{c}{$S_2$} \\\cline{3-5}
				& $S_1$ & $(0,0)$ & $(0,\alpha)$ & $(0,\alpha + \epsilon)$ \\ \cline{3-5}
				P$1$  & $C$ & $(\alpha,0)$ & $(\alpha,\alpha)$ & $(\alpha,\alpha + \epsilon)$ \\\cline{3-5}
				& $S_2$ & $(\alpha + \epsilon,1)$ & $(\alpha + \epsilon,1)$ & $(0,0)$ \\\cline{3-5}
			\end{tabular}
			\caption{\qquad \qquad \quad $\theta_1$ \vspace{3mm}}
	\end{subfigure}}
	\qquad \qquad \quad
	\scalebox{.9}{
		\begin{subfigure}{.5\textwidth}
			
			\begin{tabular}{cc|c|c|c|}
				& \multicolumn{1}{c}{} & \multicolumn{3}{c}{P$2$} \\
				& \multicolumn{1}{c}{} & \multicolumn{1}{c}{$S_1$}  & \multicolumn{1}{c}{$C$}  & \multicolumn{1}{c}{$S_2$} \\\cline{3-5}
				& $S_1$ & $(0,0)$ & $(\alpha + \epsilon,1)$ & $(\alpha + \epsilon,1)$ \\ \cline{3-5}
				P$1$  & $C$ & $(\alpha, \alpha + \epsilon)$ & $(\alpha,\alpha)$ & $(\alpha, 0)$ \\\cline{3-5}
				& $S_2$ & $(0,\alpha + \epsilon)$ & $(0,\alpha)$ & $(0,0)$ \\\cline{3-5}
				
			\end{tabular}
			\caption{\qquad \qquad \quad $\theta_2$ \vspace{3mm}}
	\end{subfigure}}
\vspace{-3mm}
	\caption{Payoffs of the robber's game achieving tight $\texttt{PoS(Pri:Pub)}$; each state occurs with probability $0.5$. \label{fig:payoff-pri}}
\end{figure}

Concretely, the payoff matrix for the aforementioned game is in the above tables. At $\theta_1$, safe 1 is empty, and at $\theta_2$, safe 2 is empty.  Note that $\theta_2$ simply exchanges the payoffs of strategies $S_1$ and $S_2$ symmetrically for both players. A-priori, both robbers do not know the state and share the common uniform random prior. The sender is a heist leader and knows which safe is which. The sender gets to pocket a cut of the total haul, so naturally she is interested in maximizing the total utility. In public signaling,   both robbers use the same radio to communicate with the sender, but in private signaling each player has his own communication radio.


We first  calculate the \texttt{PoA} for the game at each state. W.l.o.g., we consider   equilibria for state $\theta_1$ as $\theta_2$ is symmetric. Since action $C$ dominates action $S_1$ (i.e., cracking the empty safe) for both players, without loss of generality we will assume both players do not play action $S_1$ in our following analysis. It is easy to see that $(S_2, C)$ and $(C,S_2))$ are the only two pure Nash equilibria after excluding action $S_1$. We now consider mixed equilibrium in this game. Note that since both players have only two actions,  both players must randomize in any mixed equilibrium.  Let $\lambda_1, \lambda_2 \in (0,1)$ denote the probability that player 1 and 2 play $C$, respectively. We have  $\lambda_2 (\alpha + \epsilon) = \alpha$ since player 1's both actions must be equally good,  and similarly $\lambda_1 \alpha + (1-\lambda_1) = (\alpha+\epsilon) \lambda_1$. This implies $\lambda_1 = 1/(1+\epsilon)$ and $\lambda_2 = \alpha/(\alpha+\epsilon)$. The total expected payoff of this mixed strategy equilibrium is $\alpha + (\alpha + \epsilon)/(1+\epsilon)$, which is the smallest equilibrium total payoff.  The largest social payoff is however $\alpha +\epsilon +1$. Therefore, the \texttt{PoA} of this game is   $\frac{2\alpha + \epsilon + \alpha \epsilon}{(1 + \alpha + \epsilon)(1+\epsilon)}$. 


We now show that the optimal public signalling scheme is to reveal full information, assuming worst-case equilibrium selection. We prove this by arguing that for any public signal with posterior probability $p \in [0,1]$ of state $\theta_1$, the sender's utility is at most $U_0 = \alpha + (\alpha + \epsilon)/(1+\epsilon)$ in the worst equilibrium, which is the sender utility of full information revelation. 
This follows a case analysis, depending on whether $p$ is between $\frac{\epsilon}{\alpha + \epsilon} $ and $\frac{\alpha}{\alpha + \epsilon}$, greater than or equal to $\frac{\alpha}{\alpha + \epsilon}$, or less than or equal to $\frac{\epsilon}{\alpha + \epsilon}$. By cases: 
\begin{enumerate}
	\item When $p (\alpha + \epsilon) < \alpha $ and $(\alpha + \epsilon)(1-p) < \alpha$, i.e.,  $\frac{\epsilon}{\alpha + \epsilon} < p< \frac{\alpha}{\alpha + \epsilon}$ (recall that our parameter choice satisfies $\alpha > \epsilon)$. In this case, for P1,  the utility $\alpha$ of the safer action $C$  is strictly larger than the best possible expected utility $(\alpha + \epsilon)(1-p)$ of taking   action $S_1$ and larger than the best possible expected utility $(\alpha + \epsilon)p$ of taking action $S_2$ as well. Given that P1 will always take $C$, P2 strictly prefers $C$ as well as his utility  $(\alpha + \epsilon)(1-p)$ for $S_1$ and utility $(\alpha + \epsilon)p$ for $S_2$ are both smaller.  
	Therefore, both players taking action $C$ is the unique   equilibrium, leading to sender utility $2\alpha$ which is less than $ U_0$ since $U_0 >  \alpha + (\alpha + \epsilon \alpha)/(1+\epsilon) = 2\alpha$. 
	\item When $p (\alpha + \epsilon) \geq  \alpha $. In this case, state $\theta_1$ is very likely and action $S_1$ is strictly \emph{dominated} by action $C$ for P1 and thus will  not be taken by P1. We show that there exists a mixed strategy that has sender utility worse than $U_0$. In particular, consider the following mixed strategies: (1) P1 chooses action $C$ with probability $p_C = \frac{p+\alpha(1-p)}{\epsilon p+p}$ and action $S_2$ with remaining probability $1-p_C$; (2) P2 chooses action $C$ with probability $q_C = \frac{\alpha}{(\alpha+\epsilon)p}$ and action $S_2$ with remaining probability $1-q_C$. We claim that this is a mixed strategy equilibrium. In particular, both action $C$ and $S_2$ have expected utility $\alpha$ to P1 and both action $C$ and $S_2$ have expected utility $p_C(\alpha+\epsilon)p$. Therefore, the sender's utility at the worst mixed equilibrium is at most  \begin{align*}
		\alpha + \frac{p+\alpha(1-p)}{\epsilon p+p} (\alpha+\epsilon)p &= \alpha +  \frac{\alpha+\epsilon}{\epsilon +1}[p+\alpha(1-p)]\\ &\leq \alpha +  \frac{\alpha+\epsilon}{\epsilon +1} = U_0.
	\end{align*}
	\item When $(1-p)(\alpha + \epsilon) \geq  \alpha $, i.e. $p \leq \frac{\epsilon}{\alpha+\epsilon}$. This case is symmetric to Case 2, and thus has the same conclusion. 
\end{enumerate}

Finally, we argue that the optimal private scheme results in the optimal social outcome. Consider the private scheme that reveals no information to Player 2 but full information to Player $1$. Given no information, Player 2 has a strictly dominant action $C$ since $\epsilon < \alpha$. With full information, Player 1 will always open the non-empty safe.  This leads to the optimal outcome and sender utility $1+\alpha + \epsilon$.  Therefore, the $\texttt{PoS(Pri:Pub)}$ for this game equals the  $\texttt{PoA} =\frac{2\alpha + \epsilon + \alpha \epsilon}{(1 + \alpha + \epsilon)(1+\epsilon)}$. We see that when $\alpha = 1$ and $\epsilon \to 0$, the \texttt{PoA} tends to 1; when $\alpha \to 0$ and $\epsilon \to 0$, the \texttt{PoA} tends to 0. When $\alpha = 1$ and $\epsilon = 0$, we have a trivial case, with \texttt{PoA} and \texttt{PoS} 1. Thus its ratio takes any value within $(0,1]$.

\subsection{Tightness of the Lower-Bound for \texttt{PoS(exP:Pri)}}
Finally, for any $r \in (0, 1]$, we show the tightness of \texttt{PoS(exP:Pri)}. Consider the same two robbers, robbing the same bank. However, the bank has now upgraded its anti-theft countermeasures. Instead of a decoy safe, there is now an entire decoy vault. Naturally, any robber who goes into it will leave empty-handed. In the real vault, there is again a large pile of cash, but now (only) one safe. A robber can choose to take cash for a guaranteed payoff ($\alpha + \epsilon$ for Player 1, $\alpha$ for Player 2). The safe contains the same fragile, valuable crystal as in the previous section, but this time, no gold bar. If both players attempt to crack the safe, they get in each others' way and will fail. After cracking the safe, each player has enough time to take cash instead of the contents of the safe if they choose. We assume  $1 \geq \alpha > \epsilon \geq 0$ and $\alpha + \epsilon < 1$.  
If Player 2 attempts to take the cash, and Player 1 cracks the safe, Player 2 can take the crystal on the way out, \emph{instead of} the cash. However, if Player 2 does not go into the correct vault, he will leave with nothing. The detailed payoff is as in the following table. Here, $\theta_1$ corresponds to the first vault being the decoy one, whereas $\theta_2$ corresponds to the second vault being empty. Each state occurs with equal probability $0.5$. The sender as the heist leader knows which vault is empty. 

\begin{figure}[h]
	\centering
	\begin{subfigure}{\textwidth}
		\centering
		\begin{tabular}{cc|c|c|c|c|c|}
			& \multicolumn{1}{c}{} & \multicolumn{4}{c}{P2}\\
			& \multicolumn{1}{c}{} & \multicolumn{1}{c}{$C_1$}  & \multicolumn{1}{c}{$S_1$} & \multicolumn{1}{c}{$C_2$} & \multicolumn{1}{c}{$S_2$}\\\cline{3-6}
			\multirow{4}*{P1} & $C_1$ & $(0,0)$ & $(0,0)$ & $(0,\epsilon)$ & $(0, \epsilon)$\\
			\cline{3-6} & $S_1$ & $(0,0)$ & $(0,0)$ & $(0,\alpha)$ & $(0, \alpha)$\\
			\cline{3-6} & $C_2$ & $(\alpha + \epsilon,0)$ & $(\alpha + \epsilon,0)$ & $(\alpha + \epsilon,\alpha)$ & $(\alpha + \epsilon,\alpha)$\\ \cline{3-6} & $S_2$ & $(\alpha,0)$ & $(\alpha,0)$ & $(\alpha,1)$ & $(0, 0)$ \\ \cline{3-6}
		\end{tabular}
		\caption{\qquad \qquad \quad $\theta_1$ \vspace{3mm}}
	\end{subfigure}
	\begin{subfigure}{\textwidth}
		\centering
		\begin{tabular}{cc|c|c|c|c|c|}
			& \multicolumn{1}{c}{} & \multicolumn{4}{c}{P2}\\
			& \multicolumn{1}{c}{} & \multicolumn{1}{c}{$C_1$}  & \multicolumn{1}{c}{$S_1$} & \multicolumn{1}{c}{$C_2$} & \multicolumn{1}{c}{$S_2$}\\\cline{3-6}
			\multirow{4}*{P1} & $C_1$ & $(\alpha + \epsilon,\alpha)$ & $(\alpha + \epsilon,\alpha)$ & $(\alpha + \epsilon, 0)$ & $(\alpha + \epsilon,0)$\\
			\cline{3-6} & $S_1$ & $(\alpha,1)$ & $(0,0)$ & $(\alpha, 0)$ & $(\alpha, 0)$\\
			\cline{3-6} & $C_2$ & $(0, \alpha)$ & $(0,\alpha)$ & $(0,0)$ & $(0,0)$\\ \cline{3-6} & $S_2$ & $(0, \alpha)$ & $(0,\alpha)$ & $(0,0)$ & $(0, 0)$ \\ \cline{3-6}
		\end{tabular}
		\caption{\qquad \qquad \quad $\theta_2$ \vspace{3mm}}
	\end{subfigure}
	\vspace{-3mm}
	\caption{ Payoffs of the robber's game variant with tight $\texttt{PoS(exP:Pri)}$; each state occurs with  probability $0.5$.}
\end{figure}

We first calculate the \texttt{PoA} of each game. For the game at state $\theta_1$, it is easy to see that action $C_2$ strictly dominates all other actions for Player 1. This leads to $(C_2, C_2)$ and $(C_2, S_2)$ be the two unique NEs and the total payoff of any  equilibrium   is $(\alpha + \epsilon) + \alpha = 2 \alpha + \epsilon$. The maximum total payoff however is $\alpha + 1$, under action profile $(S_2, C_2)$. Similar analysis holds for state $\theta_2$. The \texttt{PoA} for each game is thus $(2\alpha + \epsilon)/(1+\alpha)$ (recall $\alpha + \epsilon<1$ in our game). 


We now argue that an optimal private  scheme is to reveal full information. Crucially, Player 1's strictly dominant action is to take $C_1$ or $C_2$, whichever is more likely to get the $\alpha + \epsilon$ cash amount in the posterior distribution of his private signal. Given this, Player 2's optimal action is  to choose $C_1/S_1$ or $C_2/S_2$, whichever is more likely to get the $\alpha$ payoff (from cash or from opening the safe)  in the posterior distribution of  his private signal. Consequently, any partial information will lead to Player 1 utility at most $\alpha + \epsilon$ and Player 2 utility at most $\alpha$. This renders full information revelation optimal, leading to total player utility $2\alpha + \epsilon$. 

Finally, we show that an ex-ante signaling scheme induces the optimal outcome. The scheme simply recommends $(S_2, C_2)$ at state $\theta_1$ and $(S_1, C_1)$ at state $\theta_2$. This satisfies the ex-ante obedience constraint \eqref{eq:cce-obedience} because: (1) if Player 1 opts out and acts according to his prior belief, he gets expected utility at most $\frac{1}{2}(\alpha + \epsilon)$, which is strictly less than his utility  $\alpha$ in the scheme; (2) Player 2 gets utility 1 in the scheme and   certainly does not want to opt out.  
Therefore, the  \texttt{PoS(exP:Pri)} ratio in this game equals precisely the \texttt{PoA} of each game $\frac{2\alpha + \epsilon}{1 + \alpha}$. This ratio tends to  1 as $\alpha + \epsilon \to  1$ and tends to 0 as $\alpha, \epsilon \to 0^+$. If $\alpha + \epsilon = 1$, the \texttt{PoA} and \texttt{PoS(exP:Pri)} are trivially equal to 1. Therefore, the ratio takes any value $r \in (0, 1]$.

\section{Discussions and Future Work}
In this paper, we initiate and formalize the concept of the power of signaling (\texttt{PoS}). In the general classes of cost-minimization and payoff-maximization games, we show that the \texttt{PoS} is inherently related to, in fact precisely characterized by, the \emph{price of anarchy} (\texttt{PoA}). 

There are many possibilities for future research. In our analysis, we use the full information scheme (\texttt{FI}) as the benchmark, since the no information scheme (\texttt{NI}) will lead to infinite bound. However, another natural and stronger benchmark is the \emph{better} of these two schemes \texttt{FI, NI}. With such a stronger benchmark, the power of signaling will not increase. One interesting question is whether we will have strictly less  power of signaling when compared to this strong benchmark. We observe that positive answers to this characterization question will have interesting algorithmic implications. For example, even restricting to non-atomic routing games with linear latency functions, if one can show that $\texttt{PoS(Pub:max\{FI,NI\})} = r$ for some $r< 4/3$, this would imply that the better between  \texttt{FI} and \texttt{NI} --- which certainly can be computed efficiently --- will serve as an $r$-approximation for optimal public signaling. However, it is proved in \cite{bhaskar2016hardness} that it is NP-hard to approximate optimal public signaling for this setting to be within a ratio better than $4/3$ in this setting. This shows that    for  non-atomic routing with linear latency functions, even $\texttt{PoS(Pub:max\{FI,NI\})}$ is strictly smaller than the \texttt{PoA} ratio $4/3$, it will be NP-hard to prove this conclusion since any proof implies an efficient approximation algorithm with ratio strictly better than $4/3$.  

The above discussion considers stronger benchmark schemes. Another direction is to study the power of signaling for more restricted classes of  signaling schemes, such as schemes with limited communication power \cite{dughmi2016persuasion} or schemes with costly communication \cite{gentzkow2014costly}? For these classes of schemes, the power of signaling  will also decrease. It is interesting to understand how much  these restrictions limit the power of signaling. On the other hand, the sender's objective considered in this paper is the total social welfare. In many applications of signaling, the sender's objective may be different from the welfare, e.g., revenue as in auctions. In this case, tools beyond the price of anarchy may be needed since \texttt{PoA} mainly concerns welfare.  It is an intriguing open direction to characterize the power of signaling in these settings.

\bibliographystyle{plain}
\bibliography{refer}

\begin{thebibliography}{10}

\bibitem{bergemann2016bayes}
Dirk Bergemann and Stephen Morris.
\newblock Bayes correlated equilibrium and the comparison of information
  structures in games.
\newblock {\em Theoretical Economics}, 11(2):487--522, 2016.

\bibitem{bhaskar2016hardness}
Umang Bhaskar, Yu~Cheng, Young~Kun Ko, and Chaitanya Swamy.
\newblock Hardness results for signaling in bayesian zero-sum and network
  routing games.
\newblock In {\em Proceedings of the 2016 ACM Conference on Economics and
  Computation}, EC ’16, page 479–496, New York, NY, USA, 2016. Association
  for Computing Machinery.

\bibitem{Miltersen12}
Peter Bro~Miltersen and Or~Sheffet.
\newblock Send mixed signals: Earn more, work less.
\newblock In {\em Proceedings of the 13th ACM Conference on Electronic
  Commerce}, EC '12, pages 234--247, New York, NY, USA, 2012. ACM.

\bibitem{Castiglioni2020SignalingIB}
Matteo Castiglioni, A.~Celli, A.~Marchesi, and N.~Gatti.
\newblock Signaling in bayesian network congestion games: the subtle power of
  symmetry.
\newblock {\em ArXiv}, abs/2002.05190, 2020.

\bibitem{Castiglioni20}
Matteo Castiglioni, Andrea Celli, and Nicola Gatti.
\newblock Persuading voters: It's easy to whisper, it's hard to speak loud.
\newblock In {\em Thirty-Forth AAAI Conference on Artificial Intelligence},
  2020.

\bibitem{celli20}
Andrea Celli, Stefano Coniglio, and Nicola Gatti.
\newblock Private bayesian persuasion with sequential games.
\newblock {\em Proceedings of the AAAI Conference on Artificial Intelligence},
  34:1886--1893, 04 2020.

\bibitem{chawla2014bayesian}
Shuchi Chawla and Balasubramanian Sivan.
\newblock Bayesian algorithmic mechanism design.
\newblock {\em ACM SIGecom Exchanges}, 13(1):5--49, 2014.

\bibitem{Cheng-example}
Yu~Cheng.
\newblock Personal conversation.
\newblock 2017.

\bibitem{conitzer2017fair}
Vincent Conitzer, Rupert Freeman, and Nisarg Shah.
\newblock Fair public decision making.
\newblock In {\em Proceedings of the 2017 ACM Conference on Economics and
  Computation}, pages 629--646, 2017.

\bibitem{das2017reducing}
Sanmay Das, Emir Kamenica, and Renee Mirka.
\newblock Reducing congestion through information design.
\newblock In {\em 2017 55th Annual Allerton Conference on Communication,
  Control, and Computing (Allerton)}, pages 1279--1284. IEEE, 2017.

\bibitem{dughmi2017survey}
Shaddin Dughmi.
\newblock Algorithmic information structure design: a survey.
\newblock {\em ACM SIGecom Exchanges}, 15(2):2--24, 2017.

\bibitem{dughmi2016persuasion}
Shaddin Dughmi, David Kempe, and Ruixin Qiang.
\newblock Persuasion with limited communication.
\newblock In {\em Proceedings of the 2016 ACM Conference on Economics and
  Computation}, pages 663--680, 2016.

\bibitem{Dughmi2016}
Shaddin Dughmi and Haifeng Xu.
\newblock Algorithmic {B}ayesian persuasion.
\newblock In {\em Proceedings of the Forty-eighth Annual ACM Symposium on
  Theory of Computing}, STOC'16, pages 412--425. ACM, 2016.

\bibitem{Dughmi2017algorithmic}
Shaddin Dughmi and Haifeng Xu.
\newblock Algorithmic persuasion with no externalities.
\newblock In {\em Proceedings of the 2017 ACM Conference on Economics and
  Computation}, pages 351--368. ACM, 2017.

\bibitem{Emek12}
Yuval Emek, Michal Feldman, Iftah Gamzu, Renato Paes~Leme, and Moshe
  Tennenholtz.
\newblock Signaling schemes for revenue maximization.
\newblock In {\em Proceedings of the 13th ACM Conference on Electronic
  Commerce}, EC '12, pages 514--531. ACM, 2012.

\bibitem{feldman2016price}
Michal Feldman, Nicole Immorlica, Brendan Lucier, Tim Roughgarden, and Vasilis
  Syrgkanis.
\newblock The price of anarchy in large games.
\newblock In {\em Proceedings of the forty-eighth annual ACM symposium on
  Theory of Computing}, pages 963--976, 2016.

\bibitem{gentzkow2014costly}
Matthew Gentzkow and Emir Kamenica.
\newblock Costly persuasion.
\newblock {\em American Economic Review}, 104(5):457--62, 2014.

\bibitem{immorlica2019access}
Nicole Immorlica, Katrina Ligett, and Juba Ziani.
\newblock Access to population-level signaling as a source of inequality.
\newblock In {\em Proceedings of the Conference on Fairness, Accountability,
  and Transparency}, pages 249--258, 2019.

\bibitem{kamenica2019bayesian}
Emir Kamenica.
\newblock Bayesian persuasion and information design.
\newblock {\em Annual Review of Economics}, 11(1):249--272, 2019.

\bibitem{kamenica2011bayesian}
Emir Kamenica and Matthew Gentzkow.
\newblock Bayesian persuasion.
\newblock {\em American Economic Review}, 101(6):2590--2615, 2011.

\bibitem{keren2020information}
Sarah Keren, Haifeng Xu, Kofi Kwapong, David Parkes, and Barbara Grosz.
\newblock Information shaping for enhanced goal recognition of
  partially-informed agents.
\newblock In {\em Proceedings of the AAAI Conference on Artificial
  Intelligence}, volume~34, pages 9908--9915, 2020.

\bibitem{koutsoupias1999worst}
Elias Koutsoupias and Christos Papadimitriou.
\newblock Worst-case equilibria.
\newblock In {\em Annual Symposium on Theoretical Aspects of Computer Science},
  pages 404--413. Springer, 1999.

\bibitem{kulkarni2014robust}
Janardhan Kulkarni and Vahab Mirrokni.
\newblock Robust price of anarchy bounds via lp and fenchel duality.
\newblock In {\em Proceedings of the twenty-sixth annual ACM-SIAM symposium on
  Discrete algorithms}, pages 1030--1049. SIAM, 2014.

\bibitem{li2019signal}
Zhuoshu Li and Sanmay Das.
\newblock Revenue enhancement via asymmetric signaling in interdependent-value
  auctions.
\newblock In {\em Proceedings of the AAAI Conference on Artificial
  Intelligence}, pages 2093--2100, 2019.

\bibitem{Mansour2016bayesian}
Yishay Mansour, Aleksandrs Slivkins, Vasilis Syrgkanis, and Zhiwei~Steven Wu.
\newblock Bayesian exploration: Incentivizing exploration in bayesian games.
\newblock In {\em Proceedings of the 2016 ACM Conference on Economics and
  Computation}, 2016.

\bibitem{massicot2018comparative}
O.~{Massicot} and C.~{Langbort}.
\newblock On the comparative performance of information provision policies in
  network routing games.
\newblock In {\em 2018 52nd Asilomar Conference on Signals, Systems, and
  Computers}, pages 1434--1438, 2018.

\bibitem{nisan2001algorithmic}
Noam Nisan and Amir Ronen.
\newblock Algorithmic mechanism design.
\newblock {\em Games and Economic behavior}, 35(1-2):166--196, 2001.

\bibitem{pigou2013economics}
Arthur~Cecil Pigou.
\newblock {\em The economics of welfare}.
\newblock Palgrave Macmillan, 2013.

\bibitem{procaccia2013approximate}
Ariel~D Procaccia and Moshe Tennenholtz.
\newblock Approximate mechanism design without money.
\newblock {\em ACM Transactions on Economics and Computation (TEAC)},
  1(4):1--26, 2013.

\bibitem{Rabinovich15}
Z.~Rabinovich, A.~X. Jiang, M.~Jain, and H.~Xu.
\newblock Information disclosure as a means to security.
\newblock In {\em Proceedings of the 14th International Conference on
  Autonomous Agents and Multiagent Systems (AAMAS),}, 2015.

\bibitem{roughgarden2015intrinsic}
Tim Roughgarden.
\newblock Intrinsic robustness of the price of anarchy.
\newblock {\em Journal of the ACM (JACM)}, 62(5):32, 2015.

\bibitem{vetta2002nash}
Adrian Vetta.
\newblock Nash equilibria in competitive societies, with applications to
  facility location, traffic routing and auctions.
\newblock In {\em The 43rd Annual IEEE Symposium on Foundations of Computer
  Science, 2002. Proceedings.}, pages 416--425. IEEE, 2002.

\bibitem{xu2020tractability}
Haifeng Xu.
\newblock On the tractability of public persuasion with no externalities.
\newblock In {\em Proceedings of the 2020 ACM-SIAM Symposium on Discrete
  Algorithms}, 2020.

\bibitem{Xu15}
Haifeng Xu, Zinovi Rabinovich, Shaddin Dughmi, and Milind Tambe.
\newblock Exploring information asymmetry in two-stage security games.
\newblock In {\em Proceedings of the Twenty-Ninth AAAI Conference on Artificial
  Intelligence}, pages 1057--1063. AAAI Press, 2015.

\bibitem{Xu18}
Haifeng Xu, Kai Wang, Phebe Vayanos, and Milind Tambe.
\newblock Strategic coordination of human patrollers and mobile sensors with
  signaling for security games.
\newblock In {\em Thirty-Second AAAI Conference on Artificial Intelligence},
  2018.

\bibitem{yan2020warn}
Chao Yan, Haifeng Xu, Yevgeniy Vorobeychik, Bo~Li, Daniel Fabbri, and Bradley~A
  Malin.
\newblock To warn or not to warn: Online signaling in audit games.
\newblock In {\em 2020 IEEE 36th International Conference on Data Engineering
  (ICDE)}, pages 481--492. IEEE, 2020.

\end{thebibliography}

\clearpage

\appendix

\setstretch{1.15}
\section{\texttt{PoS} w.r.t the No Information (\texttt{NI}) Benchmark}\label{append:no-info}
In the main body of our paper, we choose full information (\texttt{FI}) as our benchmark scheme. One might wonder what happens if the no information scheme is used instead. It turns out that no information (\texttt{NI}) may lead to very bad social welfare, and examples are fairly easy to construct.  Massicot and Langbort gave a Bayesian cost-minimization game with $\texttt{PoA}=4/3$ for each game --- more concretely, a  non-atomic routing game example with affine latency function --- such that $\texttt{PoS(Pub:NI)} \to \infty$  \cite{massicot2018comparative}.

We now exhibit a simple reward-maximization game with $\texttt{PoA} = 1$ but $\texttt{PoS(Pub:NI)} \to 0$.  
Consider a (trivial) game with  $n$ actions $A_1, A_2, A_3, ..., A_n$ and a single player. There are $n$ equally likely states of nature, with each state of nature $\theta_i$ gives utility $1$ to action $A_i$ and  $0$ utility to all other actions. The price of anarchy of this game is trivially 1, since there is only one reward-maximizing player, and full information as the optimal public scheme achieves  optimal welfare $1$. However,  in the case of no information, the player can only get expected utility $\frac{1}{n}$. As $n\to \infty$, $\texttt{PoS(Pub:NI)} \to 0$.

\section{Non-tightness of \texttt{PoS} in ``Reverse'' Routing}\label{append:reverse-route}
When proving the tightness for \emph{payoff-maximization} games, a very natural first attempt is, perhaps, to convert the previously constructed cost-minimization routing games 
into  payoff-maximization  games by flipping the  sign of cost functions and adding a large constant to make it positive. One example by reversing our game constructed for the tightness of \texttt{PoS(Pri:Pub)}  is depicted in Figure \ref{fig:reverse-routing}. That is, any edge with cost function $c_e(x)$ in our original construction can be changed to instead having payoff $N - c_e(x)$ for large positive constant $N$. 
Clearly, this is a valid payoff-maximization game, which we term ``reverse'' routing game for convenience. 
\begin{figure}[ht] 
	\centering
	\begin{tikzpicture}
	\SetVertexMath
	\Vertex[x=0, y=0, L=s]{s}
	\Vertex[x=3, y=0, L=t]{t}
	
	\tikzstyle{EdgeStyle}=[pre, bend right=40]
	\Edges[label={N - 2}](t, s)
	\tikzstyle{EdgeStyle}=[pre, bend left=40]
	\Edges[label={$N-1$}](t, s)
	\tikzstyle{EdgeStyle}=[pre]
	\Edges[label = $N-x^\alpha$](t,s)
	\end{tikzpicture}
	\captionof{figure}{A reverse routing  example \label{fig:reverse-routing}}
\end{figure}
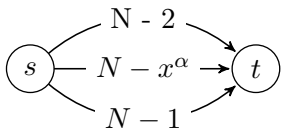

We argue this natural adaptation of our previous routing game constructions  in this  way does not produce an example with, e.g.,  tight $\texttt{PoS(Pri:Pub)}$ ratio. This is why we must turn to new constructions of payoff-maximization games. There are two reasons.  First, this adaption cannot lead to any price of anarchy ratio within $(0,1)$. In particular, it can be verified that the \texttt{PoA} of the game in Figure \ref{fig:reverse-routing} is at least $1/2$ since $N \geq 2$. The second major reason is that optimal routes in the standard cost-minimization routing game may not be optimal any more in its natural adaption to the reward-maximization situation. For example, in cost minimization, one never wants to route through a cycle but in its reward maximization variant, we would like to route through a cycle as much as possible to collect rewards (such examples are fairly easy to construct).

\end{document}